\title{Minimal TSP Tour is coNP--Complete}
\author{Marzio De Biasi \\
\texttt{\footnotesize{marziodebiasi [at] gmail [dot] com}}
}
\date{} 
\definecolor{clgray}{rgb}{0.25,0.25,0.25}
\theoremstyle{plain}
\newtheorem{theorem}{Theorem}[section]
\newtheorem{corollary}[theorem]{Corollary}
\theoremstyle{definition}
\newtheorem{definition}[theorem]{Definition}
\theoremstyle{remark}
\newcommand{\TSPDECISION}{{\sc TSPDecision}}
\newcommand{\TSPMINDECISION}{{\sc TSPMinDecision}}
\newcommand{\NPC}{{\sf NP}--complete}
\newcommand{\NP}{{\sf NP}}
\newcommand{\FPNP}{$\sf{FP^{NP}}$}
\newcommand{\CONP}{{\sf coNP}}
\newcommand{\TSPEXACT}{{\sc TSPExact}}
\newcommand{\TSPCOST}{{\sc TSPCost}}
\newcommand{\TSP}{TSP}
\newcommand{\DP}{{\sf DP}}
\newcommand{\SAT}{{\sc SAT}}
\newcommand{\THREESAT}{{\sc 3SAT}}
\newcommand{\THREECNF}{{\sc 3CNF}}
\newcommand{\FOURCNF}{{\sc 4CNF}}
\newcommand{\TSPANOTHERTOUR}{{\sc TSPAnotherTour}}
\newenvironment{myquote}{\list{}{\leftmargin=0.2in\rightmargin=0.3in}\item[]}{\endlist}
\begin{document}

\maketitle

\begin{abstract}
The problem of deciding if a Traveling Salesman Problem (TSP)
tour is minimal was proved to be \CONP{}--complete by 
Papadimitriou and Steiglitz. We give an alternative proof
based on a polynomial time reduction from \THREESAT{}.
Like the original proof, our reduction also shows that
given a graph $G$ and an Hamiltonian path of $G$,
it is \NPC{} to check if $G$ contains an Hamiltonian cycle (Restricted
Hamiltonian Cycle problem).
\end{abstract}

\section{Introduction}

The \emph{Traveling Salesman Problem} (\TSP{}) is a well--known problem from
graph theory \cite{PapadimitriouComplexity},\cite{GJ}: we are given $n$
cities and a nonnegative integer distance $d_{ij}$ between 
any two cities $i$ and $j$ (assume that the distances are symmetric, i.e.
for all $i,j, d_{ij} = d_{ji}$). We are asked to find the \emph{shortest tour}
of the cities, that is a permutation $\pi$ of $[1..n]$ such that 
$\sum_{i=1}^n d_{\pi(i),\pi(i+1)}$ (where $\pi(n+1) = \pi(n)$) is as small as possible. Its decision version is the following: 

\begin{myquote}
\TSPDECISION: If a nonnegative
integer bound $B$ (the traveling salesman's ``budget'') is given along
with the distances, does there exist
a tour of all the cities having total length no more than $B$?
\end{myquote}

\noindent\TSPDECISION{} is \NPC{} (we assume that the reader is familiar with
the theory of \NP--completeness, for a good introduction see \cite{GJ} or \cite{Sipser}).
In \cite{PapadimitriouComplexity} two other problems are introduced:

\begin{myquote}
\TSPEXACT: Given the distances $d_{ij}$ among the $n$ cities and an nonnegative integer $B$, is the length of
the shortest tour \emph{equal} to $B$; and
\\
\\
\TSPCOST: Given the distances $d_{ij}$ among the $n$ cities calculate the \emph{length} of the shortest tour.
\end{myquote}

\noindent\TSPEXACT{} is \DP--complete (a language $L$ is in the class \DP{} if and only if there are two languages $L_1 \in \NP$
and $L_2 \in \CONP$ and $L = L_1 \cap L_2$); \TSPCOST{} and \TSP{} are both \FPNP--complete 
(\FPNP{} is the class of all functions from strings to strings that can be computed by
a polynomial--time Turing machine with a \SAT{} oracle) \cite{PapadimitriouComplexity}.

Recently a post by Jean Francois Puget: 
``\href{https://www.ibm.com/developerworks/community/blogs/jfp/entry/no_the_tsp_isn_t_np_complete}{No, The TSP Isn't NP Complete}'' and the subsequent reply by Lance Fortnow: 
``\href{http://blog.computationalcomplexity.org/2014/01/is-traveling-salesman-np-complete.html}{Is Traveling Salesman NP-Complete?}'' \cite{blog:tspmindecision} (re--)raised the question of the correct interpretation
of the statement ``TSP is \NPC{}'';
indeed, if we are given a tour, checking that it is the shortest tour seems
not to be in \NP{}.
A question about the complexity of the following problem:

\begin{myquote}
\TSPMINDECISION{}: Given  a set of $n$ cities, the distance between all city pairs and a tour $T$, is T visiting each city exactly once and is T of minimal length?
\end{myquote}

\noindent was posted on \mbox{cstheory.stackexchange.com}, a question and answer site for professional researchers in theoretical computer science and related fields
\cite{cstheory:tspmindecision}. 

We gave an answer with a first sketch of the proof that \TSPMINDECISION{} is \CONP{}--complete,
but after formalising and publishing it on arXiv, \href{http://cstheory.stackexchange.com/a/21644/3247}{we discovered that the result
is not new} and it originally appeared in \cite{papasteitsp} (see also Section~19.9 in \cite{combopt}). The proof given by Papadimitriou and Steiglitz is different:
they prove that the Restricted Hamiltonian Cycle (RHC) problem is \NPC{} 
starting from an instance of the Hamiltonian cycle problem $G$ and modifying $G$
into a new graph $G'$ that contains an Hamiltonian path, and has an Hamiltonian
cycle if and only if the original $G$ has an Hamiltonian cycle.
Our alternative proof is a chain of reductions from \THREESAT{} to 
the problem of finding a tour shorter than a given one,
and it may be interesting in and of itself, so we decided not to
withdraw the paper.


\section{Minimal TSP tour is coNP--complete}
\label{sec:proof}

Proving that \TSPMINDECISION{} is \CONP{}--complete is equivalent to proving the
\NP{}--completeness of the following:

\begin{definition}[\TSPANOTHERTOUR{} problem]~\\
\noindent{
\textbf{Instance}: A complete graph $G = (V, E)$ with positive
integer distances $d_{ij}$ between the nodes,
and a simple cycle $C$ that visits all the nodes of $G$.
}
\smallskip

\noindent{\textbf{Question}: Is there a simple cycle $D$ that visits all the nodes of $G$
such the total length of the tour $D$ in $G$ is strictly less than
the total of the tour $C$ in $G$?
}
\end{definition}

\begin{theorem}\label{thm:tspanothertour}
\TSPANOTHERTOUR{} is \NPC{}.
\end{theorem}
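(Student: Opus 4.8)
\medskip
\noindent\textbf{Proof plan.}
The plan is to treat membership in \NP{} and \NP{}--hardness separately. Membership is immediate: a certificate for a yes-instance is a simple cycle $D$ through every node of $G$, and given $D$ one checks in polynomial time both that it is such a cycle and that $\sum_{e\in D}d_e < \sum_{e\in C}d_e$; hence \TSPANOTHERTOUR{}~$\in$~\NP{}. For hardness I would not reduce from \THREESAT{} in one shot but route through the Restricted Hamiltonian Cycle problem (RHC): given a graph $H$ on $n$ vertices together with a Hamiltonian path $v_1v_2\cdots v_n$ of $H$, does $H$ contain a Hamiltonian cycle? The proof then becomes a chain whose essential links are \THREESAT{}~$\le_p$~RHC and RHC~$\le_p$~\TSPANOTHERTOUR{}.

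The second link, RHC~$\le_p$~\TSPANOTHERTOUR{}, is the short one. From $(H, v_1\cdots v_n)$ I would build the complete graph $G$ on $V(H)$, assign distance $1$ to every edge of $H$ and distance $2$ to every non-edge, and take the reference tour to be $C = v_1v_2\cdots v_nv_1$. We may assume $v_1$ and $v_n$ are non-adjacent in $H$ --- otherwise $H$ plainly contains a Hamiltonian cycle and we output a fixed yes-instance of \TSPANOTHERTOUR{} --- so $C$ has length $(n-1)+2 = n+1$. Every tour of $G$ uses exactly $n$ edges, each of length at least $1$, so every tour has length at least $n$, with length exactly $n$ iff all of its edges lie in $H$, i.e.\ iff it is a Hamiltonian cycle of $H$. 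Since the distances are integers, a tour of length strictly less than $n+1$ has length at most $n$, hence exactly $n$. Therefore $G$ admits a tour shorter than $C$ if and only if $H$ has a Hamiltonian cycle, which is precisely the RHC question.

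The first link, \THREESAT{}~$\le_p$~RHC, is the substantive step and the one I expect to be the main obstacle: the usual \NP{}--hardness proofs for Hamiltonicity do not supply a Hamiltonian path of the graph they build, whereas RHC requires one to be handed over. From a \THREECNF{} formula $\varphi$ I would construct a graph $H_\varphi$ together with an \emph{explicit} Hamiltonian path $P_\varphi$ such that $H_\varphi$ has a Hamiltonian cycle if and only if $\varphi$ is satisfiable. The intended shape is a linear arrangement of variable gadgets --- each a small two-lane ``switch'' that a tour is forced to traverse in one of two directions, encoding the truth value --- chained and cross-wired to clause gadgets that become traversable only when at least one of their literals has been set true, arranged so that one canonical left-to-right sweep through every gadget is a Hamiltonian path of $H_\varphi$ regardless of $\varphi$, while any closing of such a traversal into a cycle is forced to commit each variable gadget to a single direction and to cover every clause gadget. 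Correctness then splits into (i) checking that $P_\varphi$ is Hamiltonian for every $\varphi$; (ii) converting a satisfying assignment into a Hamiltonian cycle of $H_\varphi$; and (iii) showing conversely that any Hamiltonian cycle of $H_\varphi$ induces a consistent assignment satisfying every clause, which is where the gadget design must do the real work. Composing the two links proves the theorem, and, as noted in the abstract, the first link simultaneously re-establishes that RHC is \NPC{}, recovering the Papadimitriou--Steiglitz result by a different route.
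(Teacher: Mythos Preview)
Your decomposition through RHC is sound, and the second link RHC~$\le_p$~\TSPANOTHERTOUR{} is correct; the paper effectively does the same thing, only it collapses the two links into a single reduction from \THREESAT{} and then extracts RHC as a corollary rather than using it as a waypoint. The edge weights differ cosmetically --- the paper uses $1$, $2$, $3$ where you use $1$, $2$ --- but the arithmetic is identical.

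The gap is in the first link. You correctly identify that the standard \THREESAT{}-to-Hamiltonicity reductions do not hand you a Hamiltonian path for free, and you state the property you want (``one canonical left-to-right sweep through every gadget is a Hamiltonian path of $H_\varphi$ regardless of $\varphi$''), but you do not say how to obtain it --- and in the standard diamond construction no such sweep exists when $\varphi$ is unsatisfiable, because the clause nodes then cannot all be visited by any path. The paper's device for this is to adjoin a fresh variable $z$ to every clause, turning $\varphi$ into a \FOURCNF{} formula $\varphi^z$ that is trivially satisfied by $z=\mathit{true}$. Running the standard reduction on $\varphi^z$ therefore always yields a Hamiltonian cycle (the one encoding $z=\mathit{true}$); deleting the single edge $e_z$ that this cycle is forced to use in the $z$-diamond gives the required Hamiltonian path. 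A Hamiltonian cycle in the graph with $e_z$ removed must then encode a satisfying assignment of $\varphi^z$ with $z=\mathit{false}$, i.e.\ a satisfying assignment of the original $\varphi$. This dummy-variable trick is exactly the missing idea in your plan; with it, your chain closes and coincides with the paper's argument.
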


\begin{proof}
It is easy to see that a valid solution to the problem can be verified in polynomial
time: just check if the tour $D$ visits all the cities and if its length is strictly
less than the length of the given tour $C$, so the problem is in \NP{}.
To prove its hardness we give a polynomial time reduction from \THREESAT{};
given a \THREECNF{} formula $\varphi$ with $n$ variables $x_1,\ldots,x_n$ and
$m$ clauses $C_1,...,C_m$; we introduce a new dummy variable $z$ and
add it to every clause: $(x_{i_1} \lor x_{i_2} \lor x_{i_3} \lor z)$. We obtain a \FOURCNF{} formula $\varphi^z$ that
has at least one satisfying assignment (just set $u=true$). Note that
every satisfying assignment of $\varphi^z$ in which $z =false$ is also
a satisfying assignment of $\varphi$.

From $\varphi^z$ we generate an undirected graph $G = \{V,E\}$ following the same standard transformation used
to prove that the Hamiltonian cycle problem is \NPC{}: for every
clause we add a node $c_j$, for every
variable $x_i$ we add a \emph{diamond--like} component, 
and we add a directed edge from one of the nodes of the diamond
to the node $c_j$ if $x_i$ appears in $C_j$ as a positive literal;
a directed edge from $c_j$ to one of the nodes of the diamond
if $x_i$ appears in $C_j$ as a negative literal.
Starting from the top we can choose to traverse the diamonds
corresponding to variables $x_1,x_2,...,x_n,u$ from left to right
(i.e. set $x_i$ to $true$) or from right to left (i.e. set $x_i$ to $false$).
The resulting directed graph $G$ has an Hamiltonian cycle if and only if the original formula
is satisfiable. For the details of the construction see \cite{Sipser} or \cite{AroraBarak}.

We focus on the diamond corresponding to the dummy variable $z$; let $e_z$
be the edge that must be traversed if we assign to $u$ the value of $true$
(see Figure~\ref{fig:reduction}).

\begin{figure}[htp]
\centering
\includegraphics[width=7cm]{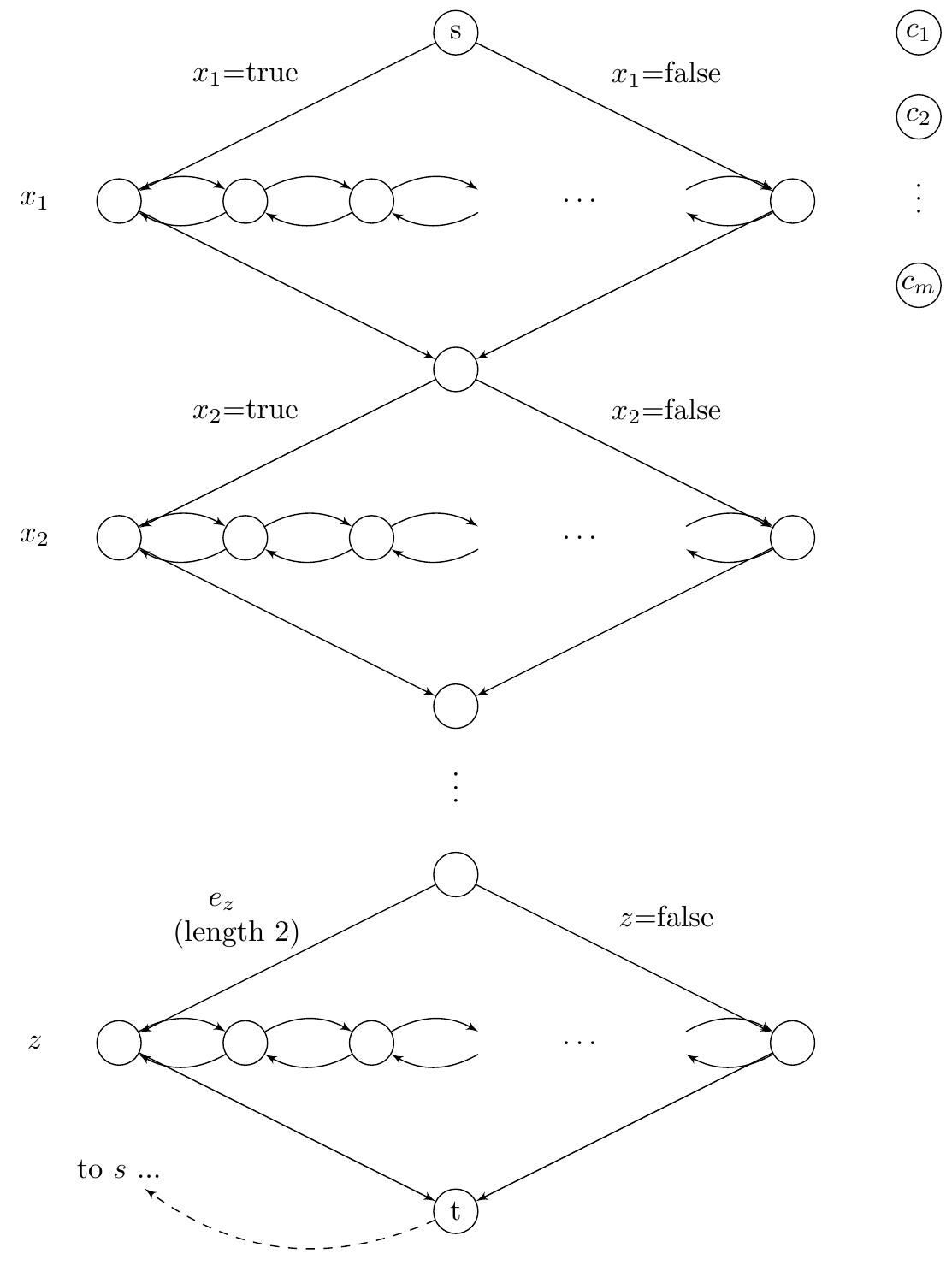}
\caption{Reduction from \THREESAT{} to directed Hamiltonian cycle.}\label{fig:reduction}
\end{figure}

We can transform $G$ to an undirected graph $G' = \{V',E'\}$
replacing each node $u \in V$ with three linked nodes $u_1, u_2, u_3 \in V'$  and modify the edges according to the standard reduction used to prove the \NP{}-completeness of  UNDIRECTED HAMILTONIAN CYCLE from DIRECTED HAMILTONIAN CYCLE \cite{Sipser}:
we use $u_1$ for the incoming edges of $u$, and $u_3$ for the outgoing edges,
i.e. we replace every directed edge $(u \to v) \in E$ with $(u_3 \to v_1) \in E'$.
We have $G'$ has an Hamiltonian cycle if and only if $G$ has an Hamiltonian cycle
if and only if $\varphi^z$ is satisfiable.

Finally we transform $G'$ into an instance of \TSPANOTHERTOUR{} assigning length $1$ to
all edges except edge $e_z$ which has length $2$; and we complete the graph
adding the missing edges and setting their length to $3$.

The dummy variable $z$ guarantees that we can easily find a tour $T$: just 
travel the diamonds from left to right without worrying of the clause nodes;
when we reach the diamond corresponding to $z$, traverse it from left to right
(i.e. assign to $z$ the value of $true$), and include all the $c_j$s. By construction the total length of the tour $T$ is exactly $|V'|+1$: all edges have length 1 except $e_u$ that has length $2$.

Another tour $D$ can have a length strictly less than $|V'|+1$ only if 
it doesn't use the edge $e_u$; so if it exists we can derive a valid
satisfying assignment for the original formula $\varphi$, indeed 
by construction $\varphi$
is satisfiable if and only if there exists a satisfying assignment for
$\varphi^z$ in which $z=false$. In the opposite direction
if there exists a valid satisfying assignment for $\varphi$ we can
easily find a tour $D$ of length $|V'|$: just traverse the diamonds
according to the truth values of the variables $x_i$ and traverse
the diamond corresponding to $z$ from right to left.

So there is another tour $D$ of total length strictly less than $T$ if
and only if the original \THREESAT{} formula $\varphi$ is satisfiable.

\end{proof}

Hence we have:

\begin{corollary}
\TSPMINDECISION{} is \CONP--complete.
\end{corollary}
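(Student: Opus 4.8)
The plan is to obtain the corollary immediately from Theorem~\ref{thm:tspanothertour} by noticing that \TSPMINDECISION{}, up to trivial manipulations, is the complement of \TSPANOTHERTOUR{}. Recall that a language is \CONP--complete exactly when its complement is \NPC{}; so it suffices to establish membership of \TSPMINDECISION{} in \CONP{} and \CONP--hardness.

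First I would verify membership. Given an instance $(G, d, T)$ of \TSPMINDECISION{}, a disqualifying (``no'') certificate is a short object of one of two kinds: either a city that the candidate tour $T$ misses or visits more than once (witnessing that $T$ is not a genuine tour), or a simple cycle $D$ through all the cities whose total length is strictly smaller than that of $T$. In either case the certificate has size polynomial in the input and is checkable in polynomial time, since the length of a tour is just a sum of the given integers $d_{ij}$. Hence the complement of \TSPMINDECISION{} lies in \NP{}, and \TSPMINDECISION{} lies in \CONP{}.

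For hardness I would reduce from \TSPANOTHERTOUR{}, which is \NPC{} by Theorem~\ref{thm:tspanothertour}. Given a \TSPANOTHERTOUR{} instance consisting of a complete graph $G$ with positive integer distances $d_{ij}$ and a simple cycle $C$ visiting all nodes, output the \TSPMINDECISION{} instance with the same cities, the same distances, and candidate tour $T = C$; this map is computable in polynomial time (it is essentially the identity). By the definition of \TSPANOTHERTOUR{}, $C$ is promised to be a simple cycle through every node, so $T$ automatically ``visits each city exactly once''. Therefore $T$ is a valid tour that fails to be minimal precisely when some simple cycle through all nodes is strictly shorter than $C$, i.e. precisely when the \TSPANOTHERTOUR{} instance is a ``yes'' instance. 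Thus the map carries ``yes'' instances of \TSPANOTHERTOUR{} to ``no'' instances of \TSPMINDECISION{} and vice versa, which exhibits it as a polynomial-time reduction from the \NPC{} language \TSPANOTHERTOUR{} to the complement of \TSPMINDECISION{}; hence \TSPMINDECISION{} is \CONP--hard. Together with the \CONP{} membership above, this proves \TSPMINDECISION{} is \CONP--complete.

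There is no serious obstacle here; the only point needing a moment's care is the ``visits each city exactly once'' conjunct built into \TSPMINDECISION{}, which is why the reduction is phrased as starting from \TSPANOTHERTOUR{} (whose input already guarantees a legitimate tour) rather than directly from, say, an arbitrary pairing of a distance matrix with an arbitrary permutation. Because the source problem supplies a genuine tour $C$, that conjunct is satisfied for free in every produced instance, and the equivalence of the two decision questions is exact.
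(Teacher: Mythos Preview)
Your argument is correct and is exactly the approach the paper takes: the paper simply remarks that proving \TSPMINDECISION{} \CONP--complete is equivalent to proving \TSPANOTHERTOUR{} \NPC{}, then states the corollary with no further proof after Theorem~\ref{thm:tspanothertour}. Your write-up makes the complementation and the ``visits each city exactly once'' conjunct explicit, but this is just a more careful spelling-out of the same idea.
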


The reduction used to prove Theorem~\ref{thm:tspanothertour} ``embeds'' the $\sf{NP}$--completeness proof of the \emph{Restricted Hamiltonian Cycle problem}
(RHC) \cite{combopt}:

\begin{theorem}
\label{cor:ham}
Given a graph $G$ and an Hamiltonian path in it, it is \NPC{} to decide
if $G$ contains an Hamiltonian cycle as well.
\end{theorem}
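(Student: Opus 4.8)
The plan is to extract the Restricted Hamiltonian Cycle result directly from the construction in the proof of Theorem~\ref{thm:tspanothertour}, rather than building a new reduction from scratch. The key observation is that the graph $G'$ produced in that proof already carries an explicit Hamiltonian path: the tour $T$ of length $|V'|+1$ uses exactly one edge of length $2$ (the edge $e_z$) and all other edges of length $1$, and those length-$1$ edges are precisely the edges of the original graph $G'$. So the plan is: first, reduce from \THREESAT{} exactly as before, obtaining $G'$ together with the closed walk $T$; second, observe that deleting the single edge $e_z$ from $T$ leaves a Hamiltonian path $P$ of $G'$ all of whose edges belong to $E(G')$ (whereas $e_z$, having length $2$, may or may not correspond to an original edge, but since we only need a Hamiltonian \emph{path} we simply drop it). Thus $\langle G', P\rangle$ is a legitimate instance of RHC computable in polynomial time from $\varphi$.

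Next I would verify the correctness of this reduction by invoking the equivalences already established. The proof of Theorem~\ref{thm:tspanothertour} shows $G'$ has a Hamiltonian cycle if and only if $\varphi^z$ is satisfiable, and also that $\varphi^z$ has a satisfying assignment with $z=\text{false}$ if and only if $\varphi$ is satisfiable. The subtlety is that RHC asks whether $G'$ has \emph{any} Hamiltonian cycle, and $G'$ always has one when $z=\text{true}$ — so the naive instance $\langle G', P\rangle$ would be a trivial ``yes'' regardless of $\varphi$. To fix this I would instead take the RHC instance to be the subgraph $G'' \subseteq G'$ obtained by \emph{removing} the edge $e_z$ (or rather, the edges corresponding to traversing the $z$-diamond left-to-right), so that a Hamiltonian cycle in $G''$ forces $z=\text{false}$; the path $P$ still lies inside $G''$, and $G''$ has a Hamiltonian cycle if and only if $\varphi$ is satisfiable. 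Membership in \NP{} is immediate since a Hamiltonian cycle is a polynomial-size certificate checkable in polynomial time.

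The main obstacle is precisely this bookkeeping around which edges to delete: one must be careful that after removing the ``$z=\text{true}$'' edge(s) from $G'$, the graph still admits the Hamiltonian path $P$ (which it does, since $P$ was obtained from $T$ by deleting $e_z$ and $T$ traversed the $z$-diamond left-to-right), and simultaneously that the only surviving way to close $P$ into a cycle routes through the $z$-diamond right-to-left, i.e. corresponds to $z=\text{false}$. Once the edge set of the RHC instance is pinned down correctly, the equivalence ``$G''$ has a Hamiltonian cycle $\iff \varphi$ is satisfiable'' follows verbatim from the argument in the proof of Theorem~\ref{thm:tspanothertour}, and \NP{}-hardness is done. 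I would close by remarking that this is essentially the content of the phrase ``the reduction embeds the RHC completeness proof'': the length-$1$ subgraph of the \TSPANOTHERTOUR{} instance, together with the cheap tour minus its unique expensive edge, \emph{is} an RHC instance.
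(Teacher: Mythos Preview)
Your eventual construction---delete the single edge $e_z$ from $G'$, exhibit $T\setminus\{e_z\}$ as the required Hamiltonian path, and argue that any Hamiltonian cycle in $G'\setminus\{e_z\}$ forces $z=\text{false}$ and hence yields a satisfying assignment of $\varphi$---is exactly the paper's proof. Two small cleanups: $e_z$ \emph{is} an edge of $G'$ (it received weight $2$, not $3$, in the TSP instance), so there is no ``may or may not''; and you must remove \emph{only} the single edge $e_z$, not ``the edges corresponding to traversing the $z$-diamond left-to-right'' in the plural, since $P$ itself uses the remaining left-to-right edges of that diamond and would not survive their deletion.
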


\begin{proof}
In the reduction above, after the creation of
the undirected graph $G'$, if we remove the edge $e_z$,
we are sure that an Hamiltonian path exists from one endpoint
of $e_z$ to the other (just delete $e_z$ from the Hamiltonian cycle that can be constructed setting $z = true$). An Hamiltonian cycle in $E \setminus \{e_z\}$ \emph{must} use
the edge corresponding to $z = false$, so it exists if and only if 
the original \THREESAT{} formula $\varphi$ is satisfiable.

\end{proof}

\section{Conclusion}

We are optimist: if someone -- out there -- shouts: ``TSP is NP--complete''
we are confident that he really means: ``The decision version of TSP is NP--complete'';
and we hope that, soon or later, someone -- out there -- will shout
``We already know that there is [not] a polynomial time algorithm that solves TSP
because $\sf{P}$ is [not] equal to $\sf{NP}$'' :-)

\section*{Acknowledgements}
Thanks to P\'alv\"olgyi D\"om\"ot\"or for the nice hint about 
Theorem~\ref{cor:ham}, and to Marcus Ritt for pointing out
the original Papadimitriou and Steiglitz's paper.

\bibliographystyle{plain} 
\bibliography{tspmindecision} 

\end{document}